\newtheorem{theorem}{Theorem}
\newtheorem{remark}{Remark}
\newcommand\norm[1]{\left\lVert#1\right\rVert}
\newcommand{\half}{\frac{1}{2}}
\def\ps@pprintTitle{%
 \let\@oddhead\@empty
 \let\@evenhead\@empty
 \def\@oddfoot{\centerline{\thepage}}%
 \let\@evenfoot\@oddfoot}
\begin{document}

\begin{frontmatter}
\title{Structure Preserving Reduced Attitude Control of Gyroscopes}

\author{Nidhish Raj\fnref{fn1}}
\author{Leonardo Colombo\fnref{fn2}}
\author{Ashutosh Simha\fnref{fn3}}

\fntext[fn1]{N. Raj (nidhish.iitk@gmail.com) is with the Department of Aerospace Engineering, Indian Institute of Technology Kanpur, UP, 208016, India.}
\fntext[fn2]{L. Colombo (leo.colombo@icmat.es) is a member of the Instituto de Ciencias Matem\'aticas (CSIC-UAMUC3M-UCM), Calle Nicol\'as Cabrera 15, Campus UAM, Cantoblanco, 28049, Madrid, Spain.}
\fntext[fn3]{A. Simha (ashutosh.iisc@gmail.com) is a member of TTU Software Science Department,
Tallinn University of Technology, Tallinn, Estonia.}

\begin{abstract}
In this work, we design a reduced attitude controller for reorienting the spin axis of a gyroscope in a geometric control framework. The proposed reduced attitude controller preserves the inherent gyroscopic stability associated with a spinning axis-symmetric rigid body. The equations of motion are derived in two frames: a non-spinning frame to show the gyroscopic stability, and a body-fixed spinning frame for deriving the controller. The proposed controller is designed such that it retains the gyroscopic stability structure in the closed loop and renders the desired equilibrium almost-globally asymptotically stable. Due to the time critical nature of the control input, the controller is extended to incorporate the effect of actuator dynamics for practical implementation. Thereafter, a comparison in performance is shown between the proposed controller and a conventional reduced attitude geometric controller with numerical simulation. Finally, the controller is validated  experimentally on a spinning tricopter.

\end{abstract} 
\end{frontmatter}

\section{Introduction}
Control system design for spinning axis-symmetric rigid bodies (henceforth called gyroscopes) plays   an important role in mechanical systems such as gyroscopes, spinning satellites and spacecrafts, and underactuated multi-rotors. These systems are typically modeled as rigid bodies with constant spin about the body fixed axis of symmetry, and the spin rate is an order of magnitude higher than the angular velocity about the other axes. From a control design perspective, the main characteristic that distinguishes the dynamics of these systems from conventional rigid body systems is that, the gyroscopic torque contributes significantly to the overall system dynamics and consequently poses non-trivial challenges in control design. 

There have been several works on attitude control for reorienting rigid body systems via control torques such as \cite{shuster, enos1994optimal, spindler1996optimal, han2004simultaneous}. More recently, feedback laws for stabilizing and tracking the attitude and reduced attitude of rigid body, have been intrinsically developed without employing local coordinates (thereby mitigating their associated singularities \cite{bhat}) by exploiting the differential geometric structure of the underlying configuration manifold (see for instance \cite{bulloreduced, tlee, sphere, maithripala2006almost, leeso3}). In such a context, a geometric proportional-derivative (PD) control law is intrinsically designed such that the proportional error is the differential of a certain Morse function, and the derivative error is the difference between the velocity, and the desired velocity after transforming it via a transport map that is compatible with the proportional error function. A notable advantage of these controllers is that they facilitate aggressive, global, and agile maneuvering, and also provide exponential stability with almost-global domain of convergence. A related, but alternate approach which exploits the Hamiltonian nature of the equations of motion of a rigid body for designing stabilizing control laws using internal and external actuation can be found in \cite{bloch1992stabilization}. 

Reduced attitude control of spinning axis-symmetric rigid body is critical in problems such as spin-stabilized artillery shells \cite{zhao2016spin}, and spin-axis stabilization of satellites and spacecrafts \cite{lange1967controlspinaxis, biggs2016geometricspinaxis, frost1991attitudespinaxis, grahnboardspinaxis, westfall2015designspinaxis, sphere}. In this situation, the spin axis typically consists of devices such as warheads, solar sensors, antennas, telescopes, etc., which may be required to be pointed in a fixed direction, or maneuvered. Some additional work in this direction are on spinning satellites with flexible appendages \cite{flex1,flex2} and drag-free spinning satellites \cite{dragfree1,dragfree2}. 
Another class of problems are under-actuated multi-rotors such as tri-copters and other multi-rotors which experience actuator failures. These systems are rigid bodies with propellers oriented along a common body-fixed axis. The complete attitude of the vehicle is controlled by differential thrusts and torques generated by the propellers. However, the spinning rigid body control problem could arise here when the number of operational rotors may not be sufficient to control the complete attitude. For instance, in controlling a tri-copter or a quadrotor with a failed rotor, the thrust axis direction is controlled, whereas the angular velocity about this axis is uncontrolled, and typically saturates at a high rate. Some works in this direction are \cite{andrea, andrearelaxed, landing1, landing2, ramptricopter, high_speed, lpv, peng, lanzonjgcd}. In all the above works, either a linear controller for the reduced attitude is developed based on pole-placement, or a nonlinear control is developed based on cancelling nonlinearities such as gyroscopic moments, friction, etc. While these control laws may stabilize the reduced attitude, they may not be robust to  moment of inertia parameters, actuator dynamics and disturbances. As such, while canceling the nonlinear gyroscopic moment, one loses the advantage of inherent spin-axis stability of gyroscopes, which is crucial for stabilizing or regulating the reduced attitude. Moreover, when the gyroscopic term is large at high-spin rates, the error in modeling as well as cancellation error due to delayed sensor feedback and actuator dynamics is significantly magnified.  Another important factor that is neglected in these works is that, the gyroscopic effect completely changes the structure of the rigid body dynamics. For instance in order to maneuver the spin-axis of a gyroscope, it is well known that one must apply a torque about the axis perpendicular to the desired axis of rotation, and not about it, as in the case of non-spinning rigid bodies (see \cite{gyroawrejcewicz2012dynamics} for instance). Due to this fact, the existing control laws (which are basically adaptation of regular rigid body controllers) even in the absence of actuator dynamics or modeling uncertainties, are highly inefficient in controlling gyroscopic systems.

In this work, we develop a geometric control law for reorienting the spin axis of a gyroscope which preserves the gyroscopic stability in the closed-loop dynamics. The control law thereby enables efficient maneuvering and also exploits the preserved inherent gyroscopic stability. The spin dynamics are derived in the body-fixed frame, as well as a non-spinning frame (akin to the gimbal frame of the gyroscope). We show via phase-portrait analysis that under high spin conditions, the spin axis rotates on an average about an axis perpendicular to the axis of applied torque. Based on this fact, a reduced attitude controller is developed such that the error dynamics preserves the gyroscopic stability structure of the original spinning rigid body dynamics. The proposed control law is a geometric proportional-derivative law, which is almost-globally, locally exponentially stable and does not depend on moment of inertia parameters i.e. does not cancel gyroscopic terms. This control law is then modified to account for first order actuator dynamics, and is subsequently appended with an observer in order to avoid computation of angular acceleration, which is not directly obtained from sensors. 
The performance of the proposed structure preserving reduced attitude controller has been demonstrated via simulations and compared with a standard geometric controller. 
The advantages of the proposed controller over conventional ones that have been demonstrated in this paper are:
\begin{itemize}
    \item The proposed control law manuevers the spin axis along a near-optimal path to the desired reduced attitude (i.e. close to the geodesic on the sphere) while conventional controllers take a longer and highly non-optimal path.
    \item The proposed controller is robust to modeling errors, sensor feedback delays and actuator dynamics, while conventional controllers fail in the presence of these factors. 
    \end{itemize}
Further, the controller has also been experimentally validated on an axis-symmetric tri-copter, in order to establish its effectiveness, robustness and applicability. 

The paper is organized as follows. In Section 2, we derive the gyroscope dynamics and give a phase portrait analysis of the gyroscopic effect. Section 3 introduces the conventional reduced attitude controller and the proposed structure preserving controller. In Section 4 we give a comparison in performance of the two reduced attitude controllers with numerical simulation. Finally, in Section 5 we shown the experimental results, followed by concluding remarks. Technical details involved in the proofs have been deferred to an Appendix.

\section{Spinning rigid body dynamics}
In this section we derive the equations of motion in a non-spinning frame so as to highlight the gyroscopic structural properties and design the controller. Define the following three frames of references: $F_0$ inertial frame, $F_2$ body fixed frame, and an intermediate frame $F_1$ which has the same $Z$ axis as that of $F_2$ (see Fig. \ref{fig:fbd}) and zero spin  about this axis. 

\begin{figure}[h!]
\centering
\includegraphics[width=0.45\linewidth]{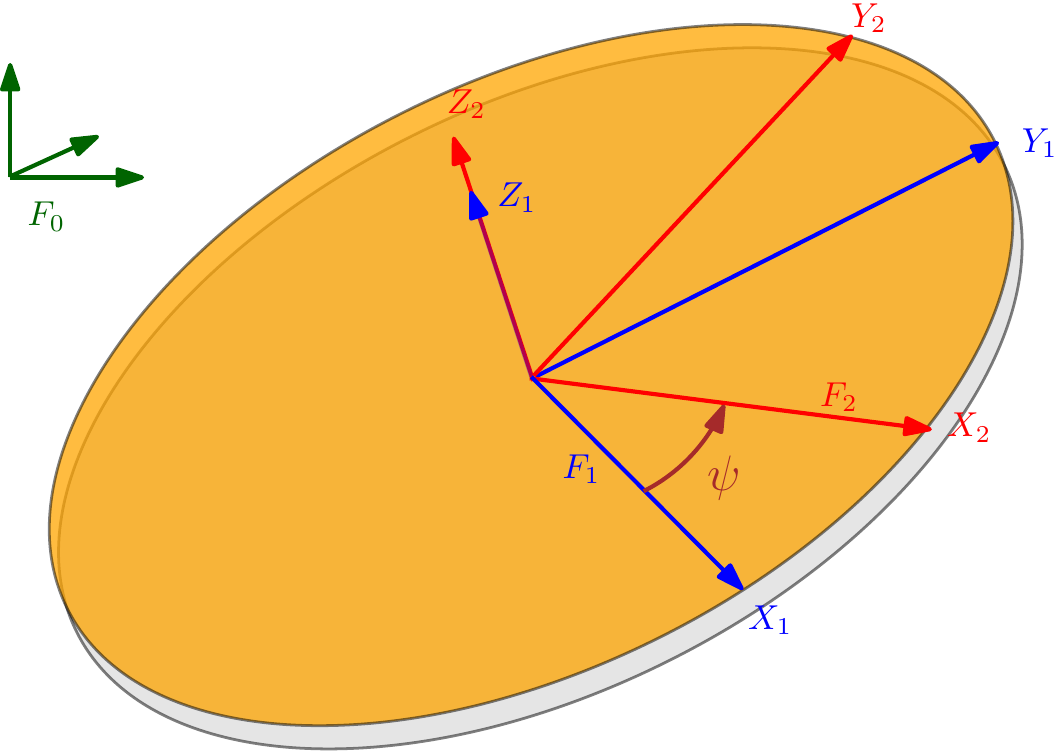}
\caption{Body-fixed spinning ($F_2$), body-fixed non-spinning ($F_1$), and inertial frame ($F_0$)}
\label{fig:fbd}
\end{figure}

The non-spinning frame $F_1$ and the body fixed frame $F_2$ are related to the inertial frame $F_0$ through the rotation matrices $R_1$ and $R_2$ respectively. $F_1$ and $F_2$ are offset by a rotation about $Z$ axis by an angle $\psi$ and those are related by $R_\psi = R_1^T R_2$. The motion of each individual frame is given by the following kinematic relation 
\begin{equation}
\dot{R}_i = R_i \hat{\omega}_i,
\end{equation} 
where $i \in \{ 1,2,\psi\}$. $\omega_1$ and $\omega_2$ are the angular velocity of frames $F_1$ and $F_2$ with respect to the inertial frame $F_0$. Here $\omega_\psi = [0,0,\bar{r}]^T$ with $\bar{r}$ the constant spin rate of the body and $\hat{\omega}_i$ is the skew-symmetric representation of $\omega_i=[\omega_{ix},\omega_{iy},\omega_{iz}]\in\mathbb{R}^{3}$, that is,
\begin{equation}
\hat{\omega}_i \triangleq \left[ {\begin{array}{ccc}
   0 & -\omega_{iz} &\omega_{iy}\\
   \omega_{iz} & 0 &-\omega_{ix}\\
   -\omega_{iy}& \omega_{ix}& 0\\
  \end{array} } \right].
\end{equation}

The angular velocities are related by 
\begin{equation}
\omega_1 = R_\psi \omega_2 - \omega_\psi.
\end{equation}
The moment of inertia of the body expressed in $F_2$ is given by $J_2 = diag(J_{sym}, J_{sym}, J_{zz})$ and is equal to its expression, $J_1$, in $F_1$ because of the axial symmetry of the body, i.e. $J_1 \triangleq R_\psi J_2 R_\psi^T = J_2$. Here $J_{sym}$ is the principle moment of inertia about the any axis in the symmetry plane and $J_{zz}$ is its value about the spin axis (Z-axis).

Since we are interested only in the orientation of the spin axis, the reduced attitude is defined by the orientation of the body frame Z-axis  expressed in the inertial frame, denoted $\Gamma \triangleq R_2 e_3 \in S^2 $, where $e_3 = [0,0,1]^T$. Similarly, the desired reduced attitude is given by $\Gamma_d \triangleq R_{2d} e_3$. Note that this definition is different form the one used in \cite{chaturvediAttTutorial}, wherein the reduced attitude is given by the body frame representation of an inertial frame fixed vector (i.e. $\Gamma = R_2^T e_3$). Therefore the reduced attitude kinematics is given by
\begin{equation}
\dot{\Gamma} = R_2 \hat{\omega}_2 e_3 = (R_2 \hat{\omega}_2 R_2^T) (R_2 e_3) = (R_2 \omega_2)^{\wedge} \times \Gamma = \Omega_2 \times \Gamma,
\end{equation}
where $\Omega_2\triangleq R_2\omega_2$.
Since $\Gamma$ represents the spin axis, the above equation could be simplified as 
\begin{equation} \label{eq:kin}
\dot{\Gamma} = \Omega_2 \times \Gamma = R_2 (\omega_2 \times e_3) = R_2 ((\bar{\omega} + \omega_{2z} e_3) \times e_3) = R_2 (\bar{\omega} \times e_3) = \Omega \times \Gamma,
\end{equation}
where $\Omega \triangleq R_2\bar{\omega}$, $\bar{\omega} \triangleq [\omega^T, 0]^T$, and $\omega \triangleq [\omega_{2x}, \omega_{2y}]^T \in \mathbb{R}^2 $.

The Euler's equation for the rigid body is given by 
\begin{equation}
J_2 \dot{\omega}_2 + \omega_2 \times J_2 \omega_2 = M_2,
\end{equation}
where $M_2 \triangleq [M_{2x}, M_{2y}, M_{2z}]$ is the net external torque acting on the body in frame $F_2$.
The above equation could be given in terms of the individual components as
\begin{equation}
\begin{aligned}
\dot{\omega}_{2x} &= -k \omega_{2y} + M_{2x}/J_{sym}, \\
\dot{\omega}_{2y} &= k \omega_{2x} + M_{2y}/J_{sym}, \\
\dot{\omega}_{2z} &= 0,
\end{aligned}
\end{equation}
where $k = (J_{zz} - J_{sym}) \bar{r}/J_{sym}$, with $\bar{r}$ the steady state spin rate about the Z-axis. At steady state, the net external torque about Z-axis, $M_{2z}$, is zero as the propeller drag torque is balanced by the aerodynamic drag torque due to rotation of the tricopter. The above equations for $\omega_x$ and $\omega_y$ could be compactly given by 
\begin{equation} \label{eq:gyro}
\dot{\omega} = A_{skw} \omega + u,
\end{equation}
where $u \triangleq [M_{2x}, M_{2y}]/J_{sym}$, and
\begin{equation}
 A_{skw} = \begin{bmatrix}
 0 & -k \\
 k & 0
 \end{bmatrix}.
\end{equation}

Next, we derive the above equations in the non-spinning frame, $F_1$, so as to highlight the gyroscopic effect. In order to do so, the expression for the angular momentum in $F_1$ is given by
\begin{equation}
H_1 = R_\psi J_2 \omega_2 = J_1 (\omega_1 + \omega_\psi).
\end{equation}
Taking the derivative of the above, the Euler's equation for the spinning body in frame $F_1$ is given by 
\begin{equation}
\dot{H}_1 = J_1 \dot{\omega}_1 + \omega_1 \times J_1 (\omega_1 + \omega_\psi) = M_1, 
\end{equation}
where $M_1$ is the external torque acting on the body expressed in frame $F_1$. 
The above equation when simplified results in a similar expression as in \eqref{eq:gyro}:
\begin{equation}
\dot{\bar{\omega}} = \bar{A}_{sk} \bar{\omega} + \bar{u},
\end{equation}
where where $\bar{\omega} = [\omega_{1x}, \omega_{1y}]$, $\bar{u} = [M_{1x}, M_{1y}]/J_{sym}$,
\begin{equation}
 \bar{A}_{sk} = \begin{bmatrix}
 0 & -\bar{k} \\
 \bar{k} & 0
 \end{bmatrix},
\end{equation}
and $\bar{k} = J_{zz}/J_{sym}\bar{r}$.
It is easy to note that the above equation is an undamped second-order linear system with natural frequency $\bar{k}$. A small amount of damping of the form $-k_D \bar{\omega}$ when added to the system would result in the following phase portraits with and without constant torques.

\begin{figure}[h!]
\centering
\begin{subfigure}[h!]{.3\textwidth}
\centering
	\includegraphics[width=1.1\textwidth]{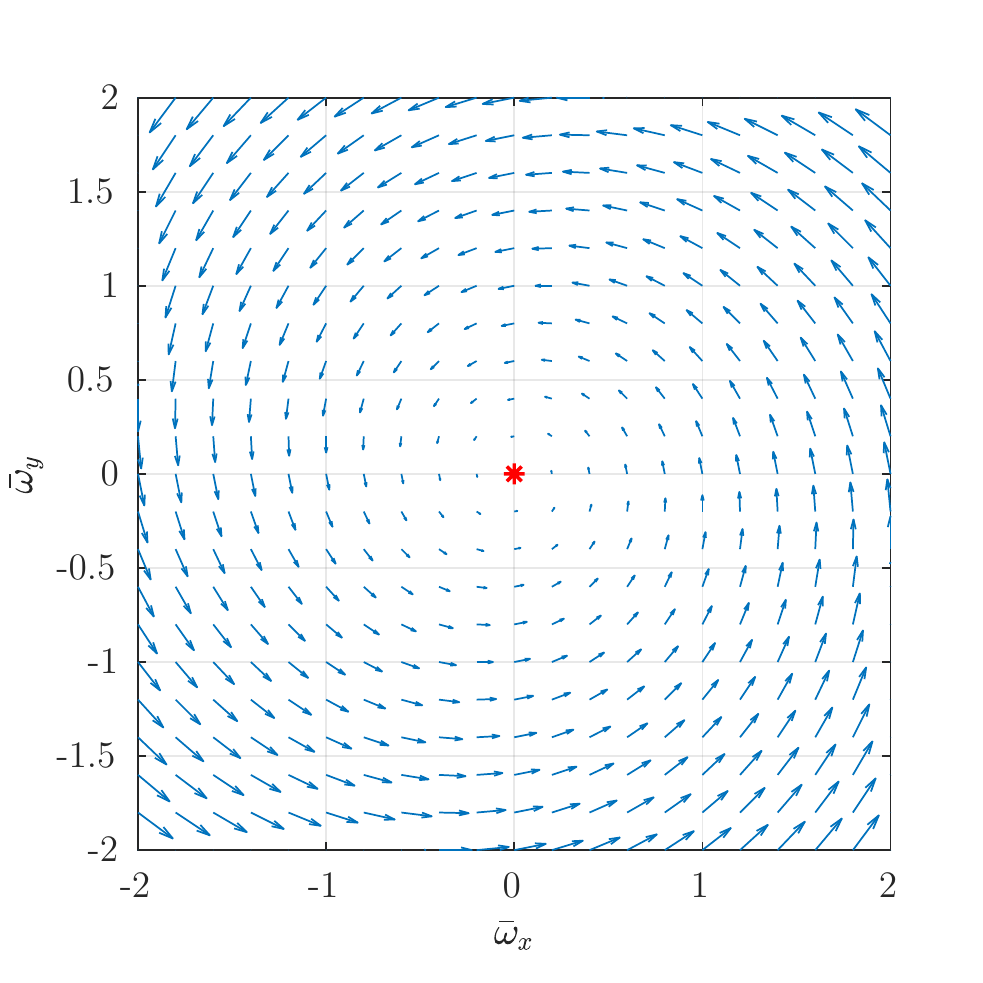}
	\caption{$\bar{u} = 0$}
	\label{fig:tor_0}
\end{subfigure} \qquad
\begin{subfigure}[h!]{.3\textwidth}
\centering
	\includegraphics[width=1.1\textwidth]{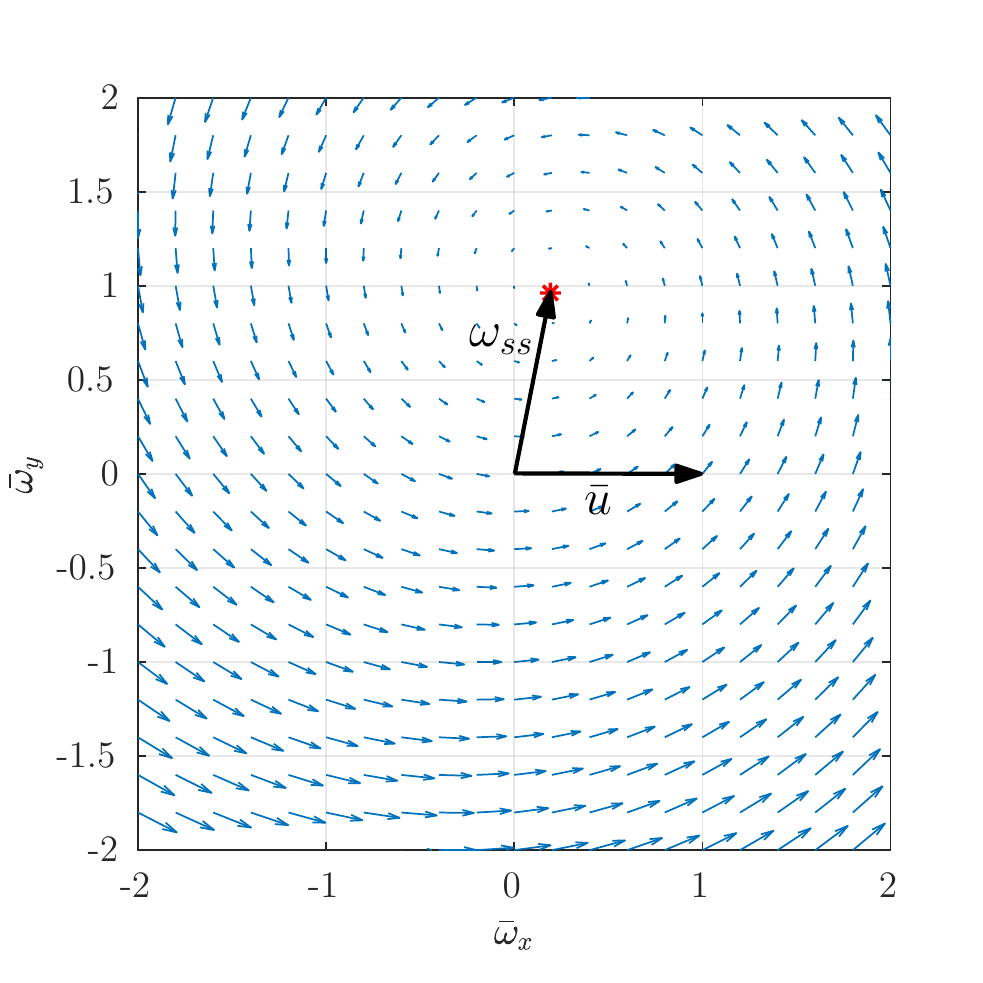}
	\caption{$\bar{u} = [u_0,0]$}
	\label{fig:tor_x}
\end{subfigure} \qquad
\begin{subfigure}[h!]{.3\textwidth}
\centering
	\includegraphics[width=1.1\textwidth]{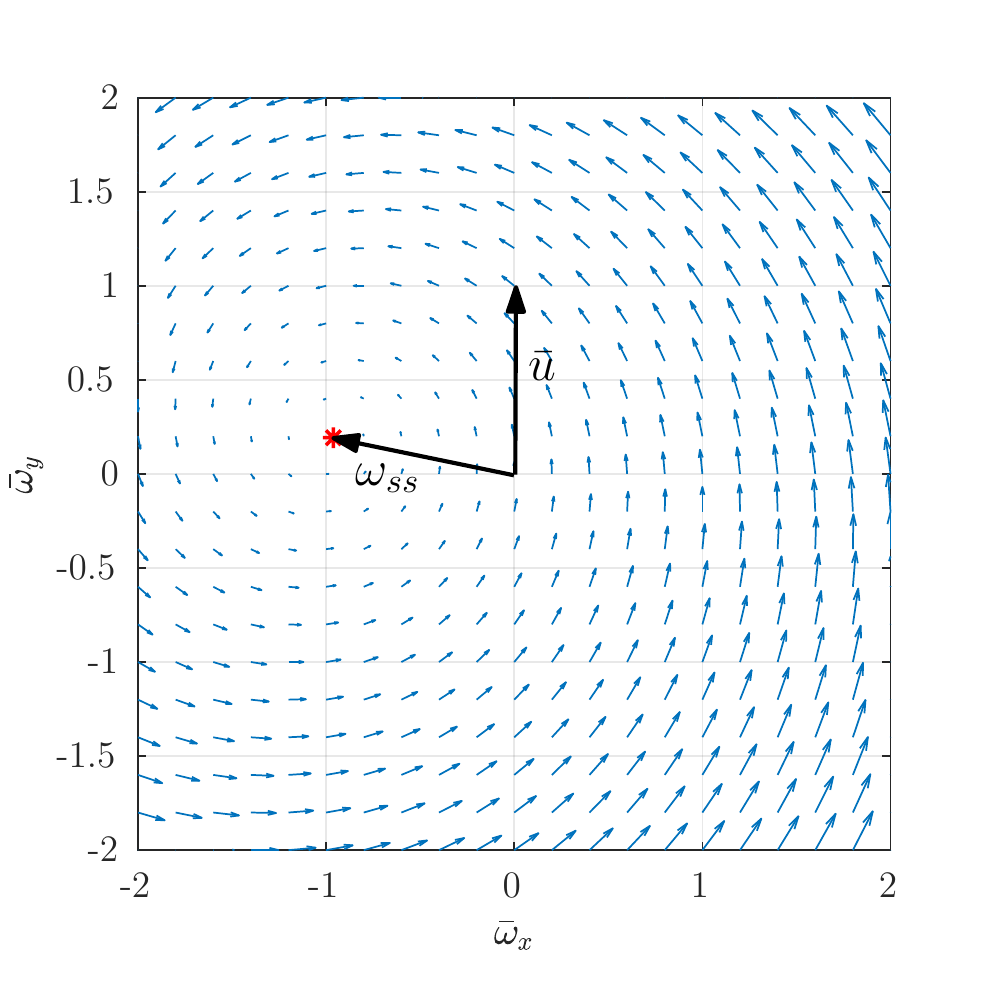}
	\caption{$\bar{u} = [0,u_0]$}
	\label{fig:tor_y}
\end{subfigure}
\caption{Phase portrait illustrating the gyroscopic effect. Here $\omega_{ss}$: steady state angular velocity, $\bar{u}$: input torque.}
\label{fig:gyro_phase}
\end{figure}

From Fig. \ref{fig:gyro_phase}, it is evident that there is almost 90 degree lag in angular velocity response to a constant torque input given in frame $F_1$. This particular aspect would be taken into account while deriving the reduced attitude controller for the trispinner in Section \ref{sec:controler}. Theoretically, it is possible to cancel the nonlinearity arising from the gyroscopic term. However, the gyroscopic term is large in magnitude and due to the delay associated with the sensors and actuators in practical implementation it is not perfectly cancellable. Further, the gyroscopic term adds inherent passive stability to the system. The main idea in this work is to preserve the gyroscopic stability during the design phase so that the closed loop system inherits it.

\section{Reduced Attitude Controller}\label{sec:controler}
In this section we present the structure preserving reduced attitude controller and the conventional reduced attitude controller. Both of them are shown to be almost globally asymptotically stable. 
The governing equations of the reduced attitude in body fixed frame $F_2$ given by \eqref{eq:kin} and \eqref{eq:gyro} are reproduced here for convenience
\begin{equation} \label{eq:dyn_gy}
\begin{gathered}
\dot{\Gamma} = \Omega \times \Gamma  \\
\dot{\omega} = A_{skw} \omega + u.
\end{gathered}
\end{equation}

\subsection{Structure Preserving Controller}
The proposed structure-preserving reduced attitude controller is given by 
\begin{equation} \label{eq:control_gy}
u = -A_{sym} \omega - A \omega_d + \dot{\omega}_d, 
\end{equation}
where $\omega_d \triangleq E_2 R_2^T \Omega_d \in \mathbb{R}^2$, $E_2$ is the $2\times 3$ projection matrix which picks xy components, $\Omega_d \triangleq k_P \Gamma \times \Gamma_d$, $A_{sym} \triangleq diag(k_D, k_D)$, and $A \triangleq A_{skw} - A_{sym}$. Note that when $\Gamma$ and $\Gamma_d$ are not collinear, then $\omega_d$ points in the direction of the geodesic joining the two points on $S^2$.
With the control input \eqref{eq:control_gy} substituted in \eqref{eq:dyn_gy}, the following error dynamics is obtained
\begin{equation} \label{eq:err_dyn}
\begin{gathered}
\dot{\Gamma} = (\Omega_d + \Omega_e) \times \Gamma \\
\dot{\omega}_e = A \omega_e,
\end{gathered}
\end{equation}
where $\Omega_e \triangleq \Omega - \Omega_d$. One may verify from the matrix $A$ that the closed loop error dynamics is that of a damped gyroscope, and therefore it is structure-preserving as well as stable. 

The closed loop system has two equilibria: $\chi \triangleq \{(\Gamma_d,0), (-\Gamma_d,0)\}$. Next we show the stability results of the proposed controller. 

\begin{theorem} \label{thm:gyro}
For $k_P > 0$, $k_D > 1/4$, the control input \eqref{eq:control_gy} renders the desired equilibrium $(\Gamma_d,0)$ of the closed loop system \eqref{eq:err_dyn} almost globally asymptotically stable.
\end{theorem}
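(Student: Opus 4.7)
The plan is to exhibit an energy-like Lyapunov function on $S^2\times\mathbb{R}^2$, derive a sign-definite bound on its derivative that isolates the two equilibria in $\chi$, and then appeal to a linearization argument at the undesired equilibrium $(-\Gamma_d,0)$ to obtain the almost-global statement. The natural candidate is
\begin{equation*}
V(\Gamma,\omega_e) \triangleq k_P(1-\Gamma\cdot\Gamma_d) + \tfrac{1}{2}\|\omega_e\|^2,
\end{equation*}
which is smooth, non-negative, vanishes only at $(\Gamma_d,0)$, and on $\{\omega_e=0\}$ attains its maximum only at $(-\Gamma_d,0)$ with value $2k_P$.

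Before computing $\dot V$ I would record three identities that carry most of the work: (i) since $A_{skw}$ is skew-symmetric, $\omega_e^T A\omega_e = -k_D\|\omega_e\|^2$; (ii) since $R_2^T\Omega_d = [\omega_d^T,0]^T$ and $R_2^T\Omega_e = [\omega_e^T,0]^T$ we have $\|\Omega_e\|=\|\omega_e\|$, while the scalar triple product gives $(\Omega_e\times\Gamma)\cdot\Gamma_d = \Omega_e\cdot(\Gamma\times\Gamma_d) = \Omega_e\cdot\Omega_d/k_P$; and (iii) the BAC--CAB identity together with $\|\Gamma\|=1$ yields $k_P(\Omega_d\times\Gamma)\cdot\Gamma_d = \|\Omega_d\|^2$. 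Combining these along the error dynamics \eqref{eq:err_dyn} gives
\begin{equation*}
\dot V \;=\; -\|\Omega_d\|^2 - \Omega_e\cdot\Omega_d - k_D\|\omega_e\|^2.
\end{equation*}
The cross term is the crux. Applying the weighted Young inequality $|\Omega_e\cdot\Omega_d|\leq \epsilon\|\omega_e\|^2 + \frac{1}{4\epsilon}\|\Omega_d\|^2$ yields
\begin{equation*}
\dot V \;\leq\; -\Bigl(1-\frac{1}{4\epsilon}\Bigr)\|\Omega_d\|^2 \;-\; (k_D-\epsilon)\|\omega_e\|^2,
\end{equation*}
which is negative semidefinite precisely when an admissible $\epsilon\in(1/4,k_D)$ exists, i.e.\ exactly when $k_D>1/4$. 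This is where the hypothesis of the theorem enters.

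Next I would invoke LaSalle: $\dot V=0$ forces $\omega_e=0$ and $\Omega_d=0$, the latter meaning $\Gamma\times\Gamma_d=0$, so the largest invariant subset of $\{\dot V=0\}$ is exactly $\chi=\{(\pm\Gamma_d,0)\}$. Asymptotic stability of $(\Gamma_d,0)$ then follows because the sublevel set $\{V<2k_P\}$ is forward-invariant and contains only the equilibrium $(\Gamma_d,0)$ from $\chi$. For the almost-global conclusion I would linearize \eqref{eq:err_dyn} at $(-\Gamma_d,0)$: writing $\Gamma=-\Gamma_d+\delta$ with $\delta\in T_{-\Gamma_d}S^2$, the reduced flow yields $\dot\delta = k_P\delta$ plus a term driven by $\omega_e$, so the linearization is block-upper-triangular with diagonal blocks $k_P I_2$ (two unstable eigenvalues $k_P>0$) and $A$ (eigenvalues $-k_D\pm ik$, both with negative real part). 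Hence $(-\Gamma_d,0)$ is a hyperbolic saddle, and by the stable manifold theorem its stable manifold is a $2$-dimensional submanifold of the $4$-dimensional state space $S^2\times\mathbb{R}^2$, hence of Lebesgue measure zero; together with LaSalle this yields almost-global asymptotic stability of $(\Gamma_d,0)$.

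The main technical hurdle I anticipate is extracting the sharp threshold $k_D>1/4$ cleanly from the indefinite cross term $\Omega_e\cdot\Omega_d$ in $\dot V$; once the weighted Young inequality is tuned with the correct exponent, the cascade structure of the dynamics (exponentially contracting $\omega_e$-subsystem, Morse gradient-like $\Gamma$-subsystem on $S^2$) together with the standard saddle-manifold argument handle the remainder of the proof.
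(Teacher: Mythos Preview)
Your proposal is correct and follows essentially the same route as the paper: the same Lyapunov function $V=k_P(1-\Gamma_d^T\Gamma)+\tfrac12\|\omega_e\|^2$, the same expression $\dot V=-\|\omega_d\|^2-\omega_d^T\omega_e-k_D\|\omega_e\|^2$, and the same LaSalle plus linearization-at-the-antipode argument. The only cosmetic difference is that the paper bounds the indefinite cross term by writing $\dot V\le X^TQX$ with $X=(\|\omega_d\|,\|\omega_e\|)^T$ and checking the $2\times2$ matrix $Q=\begin{bmatrix}-1&1/2\\1/2&-k_D\end{bmatrix}$ is negative definite (Sylvester gives $k_D>1/4$), whereas you obtain the identical threshold via a weighted Young inequality; and the paper carries out the linearization in $(\eta,\zeta)$ coordinates in an appendix while you exploit directly that $\dot\omega_e=A\omega_e$ is decoupled to get a block-triangular Jacobian with blocks $k_PI_2$ and $A$.
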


\begin{proof}
Consider the following configuration error function on the sphere
\begin{equation} \label{eq:err_fun}
\Psi(\Gamma) = 1 - \Gamma_d^T \Gamma.
\end{equation}
Its derivative along the closed loop vector field \textcolor{blue}{\eqref{eq:err_dyn}} is given by 
\begin{equation}
\dot{\Psi}(\Gamma) = - \Gamma_d^T ((\Omega_d + \Omega_e) \times \Gamma) =  \frac{-1}{k_P}(\norm{\Omega_d}^2 + \Omega_d^T\Omega_e) = \frac{-1}{k_P}(\norm{\omega_d}^2 + \omega_d^T\omega_e),
\end{equation}
where we have used the fact that a rotation matrix preserves the inner product. Next, consider the following candidate Lyapunov function 
\begin{equation}
V(\Gamma, \omega_e) = k_P\Psi + \half \omega_e^T \omega_e.
\end{equation}
The directional derivative of $V$ along the closed loop vector field is given by 
\begin{equation} 
\begin{aligned}
\dot{V} &= -(\norm{\omega_d}^2 + \omega_d^T\omega_e) + \omega_e^T A \omega_e = -(\norm{\omega_d}^2 + \omega_d^T\omega_e) - k_D \norm{\omega_e}^2 \\
\dot{V} & \leq -\norm{\omega_d}^2 + \norm{\omega_d}\norm{\omega_e} - k_D \norm{\omega_e}^2 \\
\dot{V} & \leq  \begin{bmatrix}
\norm{\omega_d} & \norm{\omega_e}
\end{bmatrix}
\underbrace{
\begin{bmatrix}
-1 & \frac{1}{2} \\
\frac{1}{2} & -k_D
\end{bmatrix}}_{Q}
\underbrace{
\begin{bmatrix}
\norm{\omega_d} \\ \norm{\omega_e}
\end{bmatrix}}_{X}.
\end{aligned}
\end{equation}
It could be verified that for a given $k_P > 0$, $Q$ is negative definite if $k_D > 1/4$.
Therefore, 
\begin{equation}
\dot{V} \leq X^T Q X < 0
\end{equation}
for $X \neq 0$. Therefore, all initial conditions in the state space converge to the set characterized by $X \equiv 0$, which is precisely the set \textcolor{blue}{$\chi$}  of two equilibrium points of the closed loop system. It could be shown with linearizion about these points that both the points are hyperbolic and the desired equilibrium point $(\Gamma_d, 0)$ is stable while the undesired one $(-\Gamma_d, 0)$ is unstable (see \ref{appd:lin}). Hence the stable manifold associated with the undesired equilibrium is of measure zero. Therefore, the set of all the initial conditions in the state space that lie outside the stable manifold of the undesired equilibrium converges to the desired equilibrium.
\end{proof}

\subsection{Conventional Controller}

The conventional reduced attitude controller is given by 
\begin{equation} \label{eq:control_conv}
u = -A_{sym} \omega + \omega_d,
\end{equation}
which when substituted in \eqref{eq:dyn_gy} results in the following error dynamics
\begin{equation} \label{eq:err_dyn1}
\begin{gathered}
\dot{\Gamma} = \Omega \times \Gamma \\
\dot{\omega} = A_{skw} \omega - A_{sym}\omega + \omega_d.
\end{gathered}
\end{equation}
The following theorem shows the stability properties of the conventional controller. The set of equilibrium points of the above error dynamics are the same as the previous case, i.e. $\chi$.
\begin{theorem}
For $k_P > 0$, $k_D > 0$, the control input \eqref{eq:control_conv} renders the desired equilibrium $(\Gamma_d,0)$ of the closed loop system \eqref{eq:err_dyn1} almost globally asymptotically stable.
\end{theorem}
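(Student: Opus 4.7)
The plan is to parallel the proof of Theorem \ref{thm:gyro}, using the same configuration error function $\Psi(\Gamma) = 1 - \Gamma_d^T \Gamma$ and the analogous Lyapunov candidate $V(\Gamma,\omega) = k_P \Psi(\Gamma) + \half \omega^T \omega$. The derivative $\dot{\Psi}$ along \eqref{eq:err_dyn1} is computed exactly as in Theorem \ref{thm:gyro}: a scalar-triple-product rearrangement together with the identity $\Omega^T \Omega_d = \omega^T \omega_d$ (which holds because $R_2^T \Omega_d = k_P\, e_3 \times R_2^T \Gamma_d$ has zero third component) yields $\dot{\Psi} = -\frac{1}{k_P}\omega^T \omega_d$.

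Differentiating $V$ along \eqref{eq:err_dyn1}, the term $\omega^T A_{skw} \omega$ drops out by skew-symmetry, $\omega^T A_{sym}\omega = k_D \norm{\omega}^2$ since $A_{sym} = k_D I_2$, and the cross term $\omega^T \omega_d$ cancels exactly against the contribution $k_P \dot{\Psi}$. The net result I expect is the clean identity $\dot{V} = -k_D \norm{\omega}^2$. In contrast with Theorem \ref{thm:gyro}, this is only negative semi-definite: it vanishes on the entire slice $\{\omega = 0\}$ rather than at isolated points, so the neat quadratic-form-in-$(\norm{\omega_d},\norm{\omega_e})$ estimate used before is not available, and an extra argument is needed to rule out non-equilibrium limit sets.

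That extra argument is LaSalle's invariance principle applied to the compact state space $S^2 \times \mathbb{R}^2$ (with sublevel sets of $V$ giving the required boundedness). On a trajectory remaining in $\{\omega = 0\}$ one has $\dot{\omega} = 0$, and since $A_{skw}\omega - A_{sym}\omega$ vanishes identically on this set, invariance forces $\omega_d \equiv 0$. Using the same identity $\norm{\omega_d} = \norm{\Omega_d} = k_P \norm{\Gamma \times \Gamma_d}$ noted above, $\omega_d = 0$ is equivalent to $\Gamma = \pm \Gamma_d$, so the largest invariant subset of $\{\dot{V} = 0\}$ is exactly the two-point set $\chi$ and every trajectory converges to $\chi$.

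Finally, to promote convergence to $\chi$ into almost-global asymptotic stability of $(\Gamma_d, 0)$, I would linearize \eqref{eq:err_dyn1} about each element of $\chi$, just as referenced in Appendix \ref{appd:lin} for the previous theorem. For any $k_P, k_D > 0$ the desired equilibrium should come out as a hyperbolic sink (locally exponentially attracting), while the antipodal one should be a hyperbolic saddle whose stable manifold is of measure zero, yielding the almost-global qualifier. The main obstacle is the LaSalle step, and specifically verifying that $\omega_d \equiv 0$ along an invariant trajectory forces $\Gamma$ to lie in the antipodal pair (rather than on some larger subset of $S^2$); the cross-product computation $R_2^T\Omega_d = k_P e_3 \times R_2^T\Gamma_d$ is the essential ingredient that makes this equivalence clean.
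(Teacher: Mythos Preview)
Your proposal is correct and follows essentially the same route as the paper: the same Lyapunov candidate $V = k_P\Psi + \tfrac{1}{2}\omega^T\omega$, the same clean identity $\dot V = -k_D\norm{\omega}^2$ obtained via the cancellation $k_P\dot\Psi + \omega^T\omega_d = 0$, then LaSalle to reduce to $\chi$ and the linearization of Appendix~\ref{appd:lin} to distinguish the two equilibria. One cosmetic slip: $S^2\times\mathbb{R}^2$ is not compact, but your parenthetical already supplies the correct fix (compactness of sublevel sets of $V$), and your LaSalle step---deducing $\omega_d\equiv 0$ from $\dot\omega\equiv 0$ on $\{\omega=0\}$ and then $\Gamma=\pm\Gamma_d$ from $\norm{\omega_d}=\norm{\Omega_d}=k_P\norm{\Gamma\times\Gamma_d}$---is in fact spelled out more explicitly than in the paper.
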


\begin{proof}
Consider the same configuration error function on the sphere as in the previous case, $\Psi(\Gamma)=1-\Gamma_d^{T}\Gamma$, whose directional derivative along the closed loop vector field \eqref{eq:err_dyn1} is given by
\begin{equation}
\dot{\Psi}(\Gamma) = - \Gamma_d^T (\Omega \times \Gamma) = - \Omega^T ( \Gamma \times \Gamma_d).
\end{equation}
Now consider the following candidate Lyapunov function 
\begin{equation}
V = k_P \Psi + \half \omega^T \omega
\end{equation}
with its directional derivative along \eqref{eq:err_dyn1} being given by
\begin{equation}
\begin{aligned}
\dot{V} &= k_P \dot{\Psi} + \omega^T \dot{\omega} = -k_P\Omega^T ( \Gamma \times \Gamma_d) + \omega^T (A_{skw}\omega - A_{sym}\omega + \omega_d)\\
&= -k_P\Omega^T ( \Gamma \times \Gamma_d) -k_D \norm{\omega}^2 + \omega^T \omega_d\\
 &= -k_D \norm{\omega}^2 \\
 & \leq 0,
\end{aligned}
\end{equation}
where we have used the fact that $\Omega = R_2 E_2^T \omega$ and $\omega_d = k_P E_2 R^T (\Gamma \times \Gamma_d)$. Since $V$ is bounded from below, all the initial conditions would converge to the set characterized by $\dot{V} \equiv 0$, i.e. $\norm{\omega} \equiv 0$. From \eqref{eq:err_dyn1}, it is evident that the positive limit set is $\chi$.  Linearization reveals the local structure around each equilibrium (see \ref{appd:lin}). As in the previous case, both the equilibria are hyperbolic and the desired equilibrium is stable whereas the undesired one is unstable. Therefore, all the initial conditions in the state space which start outside the stable manifold of the undesired equilibrium converge to the desired equilibrium. 

\end{proof}

In Section \ref{sec:sim}, we will show that the conventional reduced attitude controller, although provably almost globally asymptotically stable, is at the verge of instability in case of spin axis control of a gyroscope. It is also very inefficient in terms of control action when compared to the structure preserving controller.

\subsection{Structure Preserving Controller with Motor Dynamics and Observer}
In a real-world implementation, the required torque by the controller cannot be realized instantly due to actuator dynamics. This is particularly important in case of a fast spinning body as the body frame torque demanded by the controller is cyclic and  a time lag in its application could result in inefficient control action and even instability. In this work, the motor dynamics is modeled as a first order system given by
\begin{equation} \label{eq:motor_dy}
\dot{u} = A_m (u - v),
\end{equation}
where, $A_m = diag(-1/\tau_m,-1/\tau_m)$, $\tau_m$ is the motor time constant, and $v$ is the new control input. Now the system to be controlled is given by \eqref{eq:dyn_gy} and \eqref{eq:motor_dy}. The structure-preserving controller which compensates for the motor dynamics is given by,
\begin{equation} \label{eq:control_gy_mot}
v = u_d - A_m^{-1} \dot{u}_d, 
\end{equation}
where $u_d = -A_{sym} \omega - A \omega_d + \dot{\omega}_d$ is the desired torque given by \eqref{eq:control_gy}. The above control input results in the following error dynamics
\begin{equation} \label{eq:err_dyn_mot}
\begin{gathered}
\dot{\Gamma} = (\Omega_d + \Omega_e) \times \Gamma, \\
\dot{\omega}_e = A \omega_e + u_e, \\
\dot{u}_e = A_m u_e,
\end{gathered}
\end{equation}
where $u_e \triangleq u - u_d$. The set of equilibrium points associated with the above vector field is given by $\chi_m = \{ (\Gamma_d,0,0), (-\Gamma_d,0,0) \}$. The following theorem gives the stability properties of the closed loop system.
\begin{theorem} \label{thm:gyro_mot}
For $k_P>0$, $k_D > (1 + \tau_m)/4$, the control input \eqref{eq:control_gy_mot} renders the desired equilibrium, $(\Gamma_d,0,0)$, of the closed loop system \eqref{eq:err_dyn_mot} almost globally asymptotically stable.
\end{theorem}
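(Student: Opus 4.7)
The plan is to mimic the strategy of Theorem~\ref{thm:gyro} by augmenting the Lyapunov function used there with a quadratic penalty on the actuator-error state $u_e$, exploiting the cascade structure that the $u_e$-subsystem is decoupled and driven only by the Hurwitz matrix $A_m$. Concretely, I would take
\begin{equation*}
V(\Gamma,\omega_e,u_e) \;=\; k_P\Psi(\Gamma) \;+\; \tfrac{1}{2}\omega_e^T\omega_e \;+\; \tfrac{1}{2}u_e^Tu_e,
\end{equation*}
with $\Psi(\Gamma)=1-\Gamma_d^T\Gamma$ exactly as in Theorem~\ref{thm:gyro}. The $\Psi$-derivative computation from the proof of Theorem~\ref{thm:gyro} carries over verbatim (the $\Gamma$-dynamics in \eqref{eq:err_dyn_mot} are identical), giving $k_P\dot{\Psi}=-(\norm{\omega_d}^2+\omega_d^T\omega_e)$. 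The $\omega_e$-term contributes $\omega_e^TA\omega_e+\omega_e^Tu_e=-k_D\norm{\omega_e}^2+\omega_e^Tu_e$ since the skew part of $A$ annihilates in the quadratic form. Finally, $u_e^T\dot u_e=u_e^TA_mu_e=-\tau_m^{-1}\norm{u_e}^2$.

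Next I would apply Cauchy--Schwarz to both cross terms $\omega_d^T\omega_e$ and $\omega_e^Tu_e$ to arrive at the estimate
\begin{equation*}
\dot V \;\leq\; \begin{bmatrix}\norm{\omega_d}&\norm{\omega_e}&\norm{u_e}\end{bmatrix}
\underbrace{\begin{bmatrix}-1 & 1/2 & 0\\ 1/2 & -k_D & 1/2\\ 0 & 1/2 & -1/\tau_m\end{bmatrix}}_{Q_m}
\begin{bmatrix}\norm{\omega_d}\\ \norm{\omega_e}\\ \norm{u_e}\end{bmatrix}.
\end{equation*}
The key algebraic step is to check that $Q_m$ is negative definite precisely under the hypothesis $k_D>(1+\tau_m)/4$. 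Using Sylvester's criterion, the first leading minor is $-1<0$, the second requires $k_D>1/4$, and a direct computation of $\det Q_m=-\tau_m^{-1}(k_D-1/4)+1/4$ shows that $\det Q_m<0$ is equivalent to $4k_D-1>\tau_m$, i.e.\ exactly the stated condition. This is the one nontrivial calculation of the proof and I would expect it to be the main technical obstacle---choosing unit weights in the Lyapunov function is what makes the threshold come out to $(1+\tau_m)/4$ rather than merely $1/4$.

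Once strict negative definiteness of $Q_m$ is established, $\dot V\le 0$ with equality only on the set where $\norm{\omega_d}=\norm{\omega_e}=\norm{u_e}=0$. Since $\omega_d=k_P E_2R_2^T(\Gamma\times\Gamma_d)^{\wedge}$-type projection vanishes iff $\Gamma$ is collinear with $\Gamma_d$, the largest invariant set inside $\{\dot V=0\}$ is exactly $\chi_m=\{(\Gamma_d,0,0),(-\Gamma_d,0,0)\}$, and a LaSalle-type argument identical to the one in Theorem~\ref{thm:gyro} yields convergence of every trajectory to $\chi_m$.

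To upgrade convergence into almost-global asymptotic stability, I would linearise the closed-loop vector field \eqref{eq:err_dyn_mot} about the two equilibria. The $u_e$-block contributes the Hurwitz matrix $A_m$, which decouples from the remaining $5\times 5$ block; that block has the same structure as the Jacobian analysed in Appendix~\ref{appd:lin} for Theorem~\ref{thm:gyro}. Thus both equilibria are hyperbolic, with $(\Gamma_d,0,0)$ asymptotically stable and $(-\Gamma_d,0,0)$ of saddle type, whose stable manifold is of measure zero. Combining this with the global convergence to $\chi_m$ gives almost-global asymptotic stability of $(\Gamma_d,0,0)$.
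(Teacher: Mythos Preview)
Your proposal is correct and follows essentially the same approach as the paper: the same augmented Lyapunov function, the same $3\times 3$ quadratic-form bound on $\dot V$, the same Sylvester computation yielding the threshold $k_D>(1+\tau_m)/4$, and the same LaSalle-plus-linearisation argument to rule out the antipodal equilibrium. One small slip: the block complementary to the $2\times2$ $A_m$-block in the Jacobian is $4\times4$, not $5\times5$ (the tangent space to $S^2\times\mathbb{R}^2\times\mathbb{R}^2$ at an equilibrium is six-dimensional).
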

\begin{proof}
Consider the following candidate Lyapunov function
\begin{equation}
V = k_P\Psi + \half \omega_e^T \omega + \half u_e^T u_e.
\end{equation}
The derivative of $V$ along the closed loop vector field is given by 
\begin{equation}
\begin{aligned}
\dot{V} &= k_P \dot{\Psi} + \omega_e^T \dot{\omega}_e + u_e^T \dot{u}_e  \\
 &= -\norm{\omega_d}^2 - \omega_d^T\omega_e + \omega_e^T A \omega_e + \omega_e^T u_e + u_e^T A_m u_e \\ 
 &= -\norm{\omega_d}^2 - \omega_d^T\omega_e - k_D \norm{\omega_e}^2 + \omega_e^T u_e - \frac{1}{\tau_m} \norm{u_e}^2, 
 \end{aligned}
\end{equation}
 then, 
 \begin{equation}
\begin{aligned}
\dot{V} & \leq -\norm{\omega_d}^2 + \norm{\omega_d}\norm{\omega_e} - k_D \norm{\omega_e}^2 + \norm{\omega_e}\norm{u_e} - \frac{1}{\tau_m} \norm{u_e}^2 \\
\dot{V} & \leq  \begin{bmatrix}
\norm{\omega_d} & \norm{\omega_e} & \norm{u_e}
\end{bmatrix}
\underbrace{
\begin{bmatrix}
-1 & \half & 0 \\
\half & -k_D & \half \\
0 & \half & -\frac{1}{\tau_m}
\end{bmatrix}}_{Q}
\underbrace{
\begin{bmatrix}
\norm{\omega_d} \\ \norm{\omega_e} \\ \norm{u_e}
\end{bmatrix}}_{X}.
\end{aligned}
\end{equation}
The matrix $Q$ is negative definite if all the upper left diagonal submatrices of $-Q$ have positive determinant, which leads to the condition $k_D > (1 + \tau_m)/4$. Therefore, $\dot{V} \leq X^T Q X < 0$ for $X \neq 0$. Hence, the positive limit set of all initial conditions in the state space $S^2 \times \mathbb{R}^2 \times \mathbb{R}^2$ is characterized by $\dot{V} \equiv 0$, i.e. $X \equiv 0$, which is the set $\chi_m$. Linearization of the error dynamics \eqref{eq:err_dyn_mot} about the two equilibrium points in $\chi_m$ reveal that the desired equilibrium $(\Gamma_d,0,0)$ is stable, while the other one is unstable (see \ref{appd:lin}). Using similar arguments as in the previous case, the stable manifold of $(-\Gamma_d,0,0)$ is of measure zero. Hence all initial conditions that start outside the stable manifold of the undesired equilibrium converge to $(\Gamma_d,0,0)$.

\end{proof}

\begin{remark}
Note that the condition $k_D > (1+\tau_m)/4$ given in previous theorem is a mild one and is close to the one given in Theorem \ref{thm:gyro} as the actuator time constant is  small, $\tau_m << 1$.
\end{remark}

The controller proposed in \eqref{eq:control_gy_mot} requires calculation of $\dot{u}_d$ which is dependent on the angular acceleration $\dot{\omega}$, a quantity not directly measurable using existing set of sensors. To circumvent this issue, we design an observer for $u$ and obtain $\dot{\omega}$ from \eqref{eq:dyn_gy}. Since the actuator dynamics given by \eqref{eq:motor_dy} is already stable and fast, the observer could be simply chosen as 
\begin{equation}
\dot{\hat{u}} = A_m (\hat{u} - v),
\end{equation}
which results in the following observer error dynamics
\begin{equation}
\dot{\tilde{u}} = A_m \tilde{u},
\end{equation}
where $\tilde{u} \triangleq u_d - u$. The modified controller with the observer is given by
\begin{equation} \label{eq:control_gy_mot_obs}
v = u_d - A_m^{-1} \dot{\bar{u}}_d,
\end{equation}
where
\begin{equation}
\begin{gathered}
\dot{\bar{u}}_d = -A_{sym}\dot{\bar{\omega}} - A\dot{\omega}_d + \ddot{\bar{\omega}}_d, \\
\ddot{\bar{\omega}}_d = k_P \hat{e}_3 ( \hat{\omega}^2 R_2^T R_d e_3 - (R_2^T R_d e_3)^\wedge \dot{\bar{\omega}}), \\
\dot{\bar{\omega}} = A_{skw}\omega + \hat{u}.
\end{gathered}
\end{equation}
The above controller results in the following closed loop system
\begin{equation} \label{eq:err_dyn_mot_obs}
\begin{gathered}
\dot{\Gamma} = \Omega \times \Gamma, \\
\dot{\omega}_e = A\omega_e + u_e, \\
\dot{u}_e = A_m u_e + M \tilde{u}, \\
\dot{\tilde{u}} = A_m \tilde{u},
\end{gathered}
\end{equation}
where $M \triangleq A_{sym} + k_P \hat{e}_3 (R_2^T R_d e_3)^\wedge$. The set of equilibrium is denoted $\chi_o \triangleq \{ (\Gamma_d,0,0,0), (-\Gamma_d,0,0,0) \}$.
The following theorem gives the stability properties of $\chi_o$ associated with the closed loop system.
\begin{theorem}
For $k_P > 0$, $k_D > (1+\tau_m)/4$, the controller \eqref{eq:control_gy_mot_obs} renders the desired equilibrium $(\Gamma_d,0,0,0)$ of the closed loop system \eqref{eq:err_dyn_mot_obs} almost globally asymptotically stable.
\end{theorem}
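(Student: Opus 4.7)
The plan is to mirror the Lyapunov argument of Theorem \ref{thm:gyro_mot} while treating the exponentially decaying observer error $\tilde{u}$ as a vanishing perturbation that couples into the $u_e$ channel. The first observation is that the last row of \eqref{eq:err_dyn_mot_obs} is a decoupled linear system with $A_m$ Hurwitz, so $\|\tilde{u}(t)\|\le e^{-t/\tau_m}\|\tilde{u}(0)\|$ globally. Hence the only new ingredient compared to Theorem \ref{thm:gyro_mot} is the forcing term $M\tilde{u}$ entering the $u_e$ equation, and the matrix $M=A_{sym}+k_P\hat{e}_3(R_2^T R_d e_3)^\wedge$ is uniformly bounded (in fact $\|M\|\le k_D+k_P$) because $R_2^T R_d e_3\in S^2$.

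Next I would augment the Lyapunov function of Theorem \ref{thm:gyro_mot} by a quadratic penalty on the observer error, setting
\begin{equation*}
V(\Gamma,\omega_e,u_e,\tilde{u}) \;=\; k_P\Psi(\Gamma) + \half\omega_e^T\omega_e + \half u_e^T u_e + \frac{c}{2}\tilde{u}^T\tilde{u},
\end{equation*}
for a constant $c>0$ to be chosen. Differentiating along \eqref{eq:err_dyn_mot_obs}, the first three terms reproduce the expression computed in the proof of Theorem \ref{thm:gyro_mot}, and the extra contributions are $u_e^T M\tilde{u}$ from the $u_e$ equation and $-c\|\tilde{u}\|^2/\tau_m$ from the last equation. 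Bounding $u_e^T M\tilde{u}\le(k_D+k_P)\|u_e\|\|\tilde{u}\|$ and collecting terms, I obtain $\dot{V}\le X^T Q X$ where
\begin{equation*}
X=\bigl[\,\|\omega_d\|,\;\|\omega_e\|,\;\|u_e\|,\;\|\tilde{u}\|\,\bigr]^T, \qquad Q=\begin{bmatrix} -1 & \half & 0 & 0 \\ \half & -k_D & \half & 0 \\ 0 & \half & -\tfrac{1}{\tau_m} & \tfrac{k_D+k_P}{2} \\ 0 & 0 & \tfrac{k_D+k_P}{2} & -\tfrac{c}{\tau_m} \end{bmatrix}.
\end{equation*}
Under the hypothesis $k_D>(1+\tau_m)/4$, Sylvester's criterion applied to $-Q$ already guarantees that its first three principal minors are positive (these are exactly the ones considered in Theorem \ref{thm:gyro_mot}); since the fourth minor is linear in $c$ with positive leading coefficient, choosing $c$ sufficiently large makes $Q$ negative definite, so $\dot{V}<0$ whenever $X\neq 0$.

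From here the argument proceeds as in Theorem \ref{thm:gyro_mot}: $V$ is bounded below on the compact-times-Euclidean state space, trajectories are bounded, and every positive limit point lies in $\{X\equiv 0\}$, which by the definitions of $\omega_d$, $\omega_e$, $u_e$, $\tilde{u}$ coincides with the equilibrium set $\chi_o$. To distinguish the two equilibria I would appeal to \ref{appd:lin}: the linearization of \eqref{eq:err_dyn_mot_obs} about any point of $\chi_o$ is block lower triangular, with the observer block contributing the double eigenvalue $-1/\tau_m<0$ and the remaining block being precisely the linearization analysed for Theorem \ref{thm:gyro_mot}. Consequently $(\Gamma_d,0,0,0)$ inherits hyperbolic stability while $(-\Gamma_d,0,0,0)$ inherits hyperbolic instability with unstable subspace of positive dimension, so its stable manifold has measure zero and almost-global asymptotic stability of the desired equilibrium follows.

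The main obstacle is the cross term $u_e^T M\tilde{u}$: naive bounding would force $k_D$ to grow with $k_P$, spoiling the clean condition inherited from Theorem \ref{thm:gyro_mot}. The fix is exploiting the extra degree of freedom $c$: since $\tilde{u}$ enters only the bottom-right block of $Q$, enlarging the weight on $\|\tilde{u}\|^2$ absorbs the coupling without tightening the constraint on $k_D$, which is why the theorem retains the same gain condition as Theorem \ref{thm:gyro_mot}.
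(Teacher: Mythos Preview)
Your proof is correct and, in fact, more self-contained than the paper's. The paper does not construct an augmented Lyapunov function; instead it argues by cascade: the observer error $\tilde{u}$ decays exponentially on its own, the coupling matrix $M$ is uniformly bounded because $R_2^TR_d\in SO(3)$ lies in a compact set, hence $u_e$ and $\omega_e$ remain bounded (each driven by a Hurwitz matrix with a bounded, vanishing input), and $\Gamma$ evolves on the compact manifold $S^2$; it then invokes Theorem~\ref{thm:gyro_mot} for the unperturbed system ($\tilde{u}\equiv 0$) and cites an external result (\cite{maithripala2006almost}, Lemma~2) to conclude that almost-global asymptotic stability persists under the exponentially vanishing perturbation. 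Your route avoids the external reference by absorbing the cross term $u_e^TM\tilde{u}$ into a single quadratic form via the free weight $c$ on $\|\tilde{u}\|^2$, which is a clean and standard device that keeps the argument fully internal to the paper. The paper's approach is terser and emphasises the separation-principle structure of the controller--observer pair; yours produces an explicit Lyapunov certificate for the full closed loop, which is the better choice if one subsequently wants quantitative convergence rates or robustness margins.
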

\begin{proof}
Since the observer error dynamics is independent of the controller, linear and stable, $\lim_{t \to \infty} \tilde{u} \to 0$ exponentially. Further, $M$ is bounded for all time as $R_2^T R_d \in SO(3)$ and $SO(3)$ is compact. As a result $\omega_e$ and $u_e$ are bounded as $A$ and $A_m$ are Hurwitz. Since $\Gamma$ evolves on the compact manifold $S^2$, it also remains bounded. In Theorem \ref{thm:gyro_mot}, the desired equilibrium $(\Gamma_d,0,0,0)$ of the error dynamics \eqref{eq:err_dyn_mot_obs} has been shown to be almost globally asymptotically stable with zero observer error, i.e. $\tilde{u} \equiv 0$.
\end{proof}
A proof for showing almost global stability of a controller-observer system evolving on a Lie group is given in \cite{maithripala2006almost} (Lemma 2). 

\section{Simulation Results} \label{sec:sim}

In this section we give a comparison of the various reduced attitude controllers presented in this work with numerical simulations. The simulations are carried out for spin axis stabilization of a gyroscope whose  parameters are provided in Table \ref{tab:spinner_params}. 

\begin{table} [h!]
\renewcommand{\arraystretch}{1.3}
\caption{Trispinner parameters}
\label{tab:spinner_params}
\centering
\begin{tabular}{l l l} 
 \hline
 Parameter & Description & Values  \\
 \hline
 $[J_{xx}, J_{yy}, J_{zz}]$ & Moment of inertia & [1.55, 1.55, 2.76]$\times10^{-2}$ $kg$-$m^2$  \\ 
 
 $m$ & Vehicle mass & 490 $g$\\ 
 
 $\tau_m$  & Motor time constant & 0.0846 $s/rad$  \\  

 $L$  & Motor arm length & 13.4 $cm$ \\
 
 $F_{max}$ & Maximum motor force & 6.74 $N$ \\ [1ex]
 \hline
\end{tabular}
\end{table}

The torque actuators of the gyroscope is approximated to be first order and are rigidly fixed to the body frame and hence spin with the gyroscope. A Lie group variational integrator \cite{lieintegrator2006} is used to simulate the rigid body dynamics of the gyroscope. In all the simulations the reduced attitude $\Gamma$ of the spin axis is plotted on a unit sphere in blue to show the path it takes to reach the desired orientation $\Gamma_d$. This visualization helps in comparing different controllers in terms of stability and the efficiency in terms of geodesic distance, i.e. closeness of the path traced out by the spin axis on the sphere to the geodesic connecting the initial and desired orientation. For all the simulations, the initial spin axis orientation is horizontal which is denoted with blue unit vector on the sphere plot, the desired spin axis orientation is vertical and is denoted with an orange unit vector and the final orientation of the spin axis at the end of the simulation is denoted with a red unit vector.

First, a comparison is made between the conventional reduced attitude controller given by \eqref{eq:control_conv} with the structure preserving controller given by \eqref{eq:control_gy} without the motor dynamics and is shown in Fig. \ref{fig:sim_no_motor}. Both the controllers were tuned with the same set of gains.

\begin{figure}[h!]
\centering
\begin{subfigure}{.4\textwidth}
\centering
	\includegraphics[width=1\textwidth]{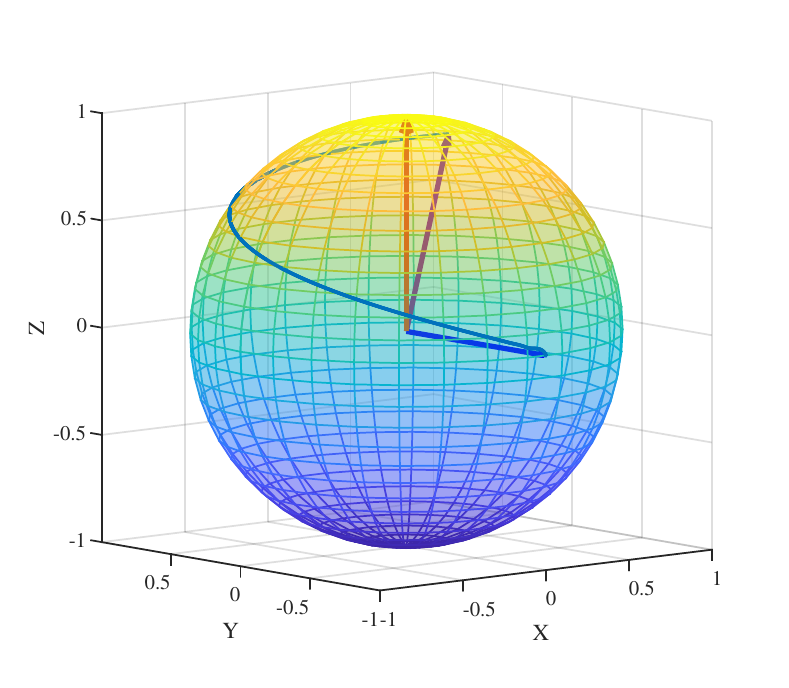}
\end{subfigure} \qquad
\begin{subfigure}{.4\textwidth}
\centering
	\includegraphics[width=1\textwidth]{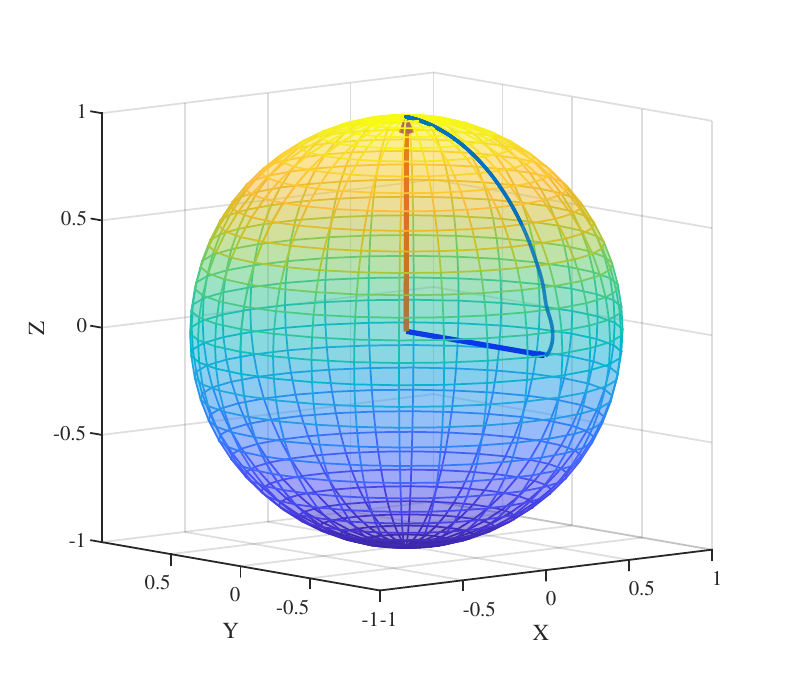}
\end{subfigure}
\begin{subfigure}{.5\textwidth}
\centering
	\includegraphics[width=1.05\textwidth]{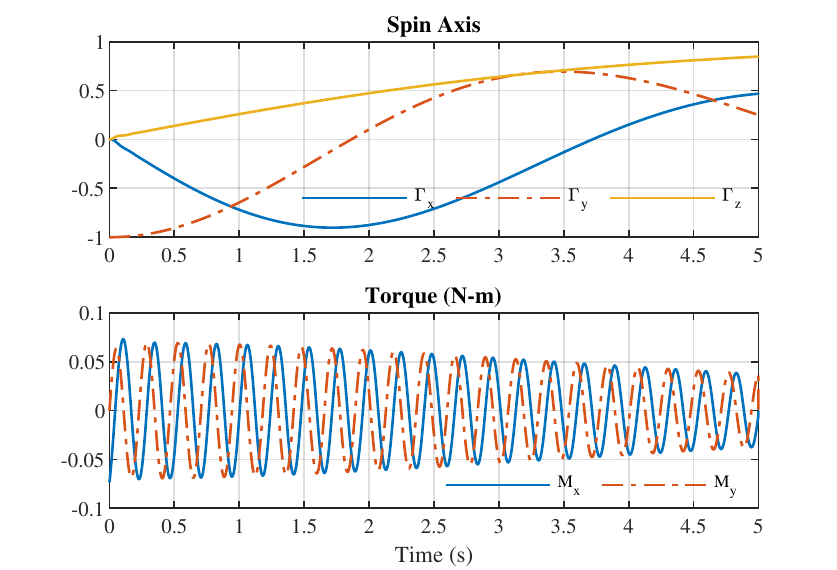}
	\caption{Conventional Controller}
	\label{fig:sim_nomot_conv}
\end{subfigure}%
\begin{subfigure}{.5\textwidth}
\centering
	\includegraphics[width=1.05\textwidth]{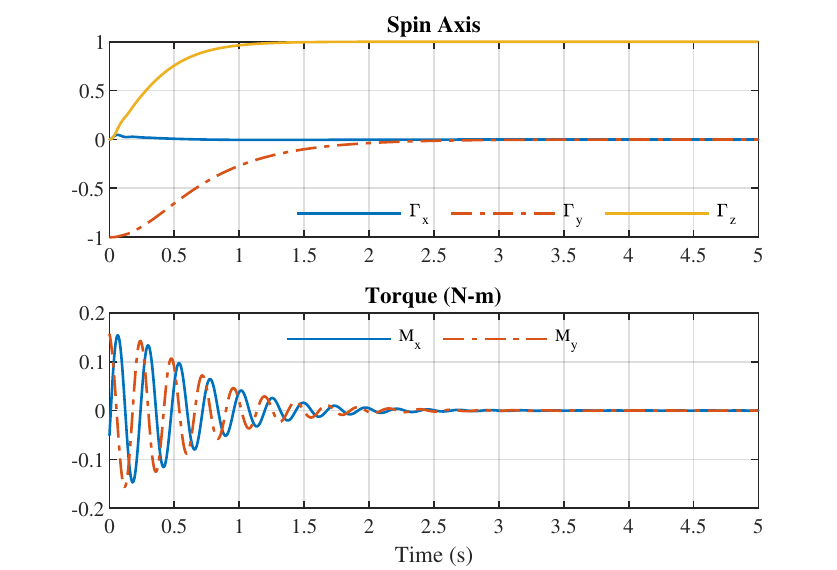}
	\caption{Structure Preserving Controller}
	\label{fig:sim_nomot_str}
\end{subfigure}
\caption{Comparison of the conventional and the structure preserving controller in simulation without the motor dynamics.}
\label{fig:sim_no_motor}
\end{figure}

Since the structure preserving controller explicitly takes into account the gyroscopic torque, the path traced out by the spin axis on the sphere is close to the geodesic. Note that the controller does not cancel the gyroscopic torque, instead it is retained in the closed loop such that the entire system inherits the stability associated with the term. However, the conventional reduced attitude controller does not acknowledge the large gyroscopic torque, and applies the proportional torque along the geodesic direction. This results in the spin axis tracing out a curved path on the sphere far away from the geodesic. For the given simulation time of 5 seconds, the structure preserving controller converges in about 2 seconds whereas the conventional controller has an error of 32 deg. between the final and the desired spin axis orientation.

In Fig. \ref{fig:sim_with_motor}, we compare the performance of the conventional and the structure preserving controllers in the presence of first order motor dynamics. The former is unstable and diverges from the desired spin axis $\Gamma_d$ to it's antipodal point $-\Gamma_d$. Whereas the structure preserving controller, although asymptotically Lyapunov stable, has a very slow rate of convergence for the same set of gains as used in the previous case. The final spin axis orientation is off by 12.5 deg. from the desired vertical direction after 15 sec. while tracing out a spiral on the sphere. 

Finally, in Fig. \ref{fig:sim_with_motor_obs}, we show the performance of the structure preserving controller with the observer  which explicitly takes into account the motor dynamics as given by \eqref{eq:control_gy_mot_obs}. The performance is similar to that of the controller given by \eqref{eq:control_gy} in the case without motor dynamics as shown in Fig. \ref{fig:sim_nomot_str}, except for the initial transient due to the observer dynamics. The spin axis \textcolor{blue}{achieves} the desired vertical direction in 2 seconds while being close to the geodesic connecting the initial and the desired spin axis direction.

Note that the simulations study conducted in this section shows that the conventional controller, although proven to be almost globally asymptotically stable, is prone to be unstable in the presence of motor dynamics. The structure preserving controller which does not take into account the motor dynamics is stable but takes a long time to converge to the equilibrium.

\begin{figure}[h!]
\centering
\begin{subfigure}{.4\textwidth}
\centering
	\includegraphics[width=1\textwidth]{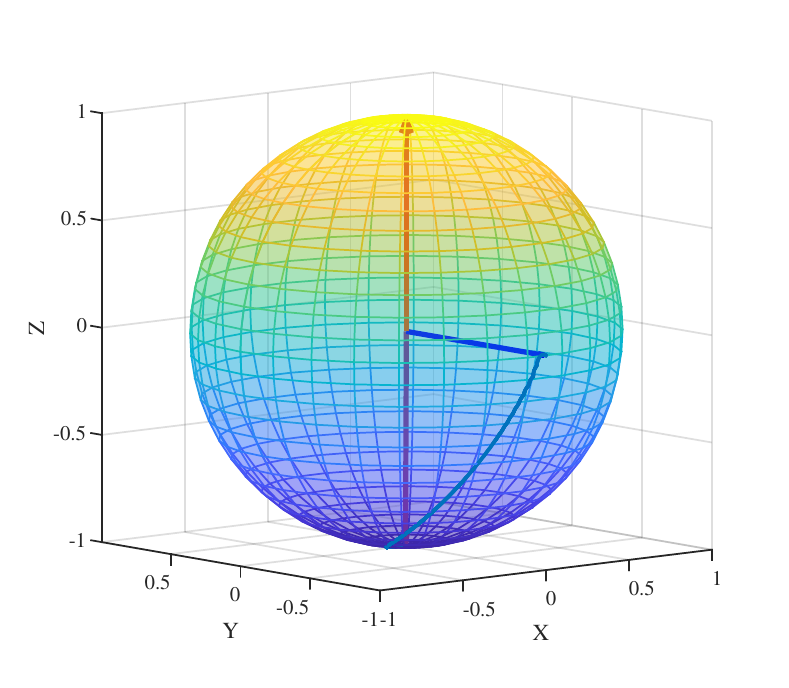}
\end{subfigure} \qquad
\begin{subfigure}{.4\textwidth}
\centering
	\includegraphics[width=1\textwidth]{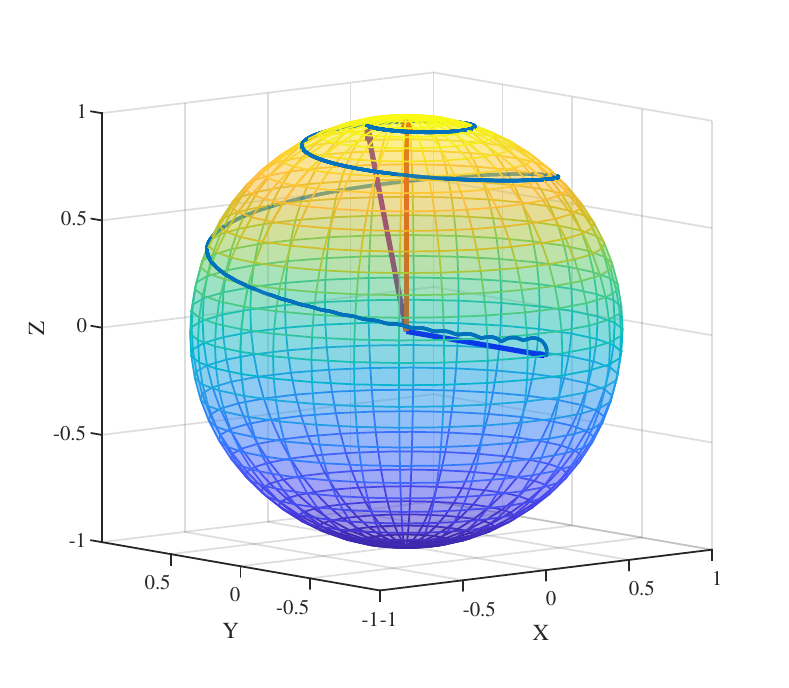}
\end{subfigure}
\begin{subfigure}{.5\textwidth}
\centering
	\includegraphics[width=1.05\textwidth]{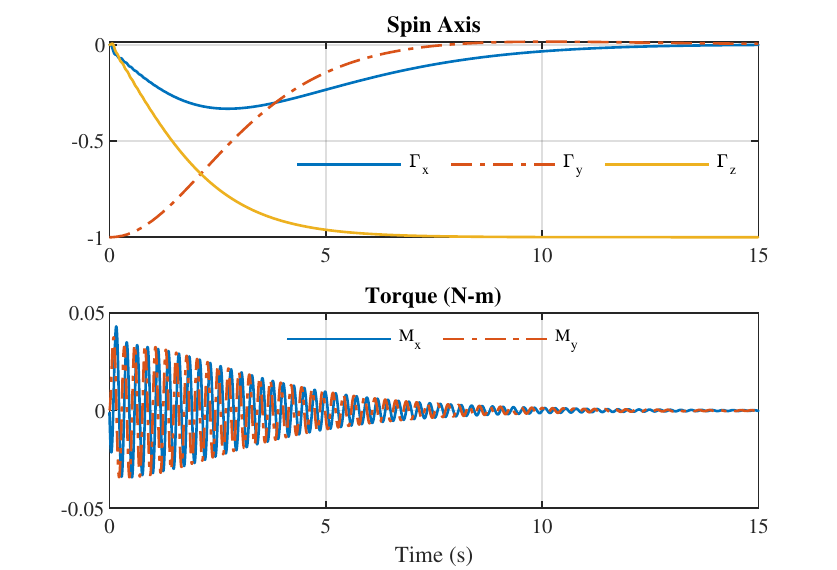}
	\caption{Conventional Controller}
	\label{fig:sim_mot_conv}
\end{subfigure}%
\begin{subfigure}{.5\textwidth}
\centering
	\includegraphics[width=1.05\textwidth]{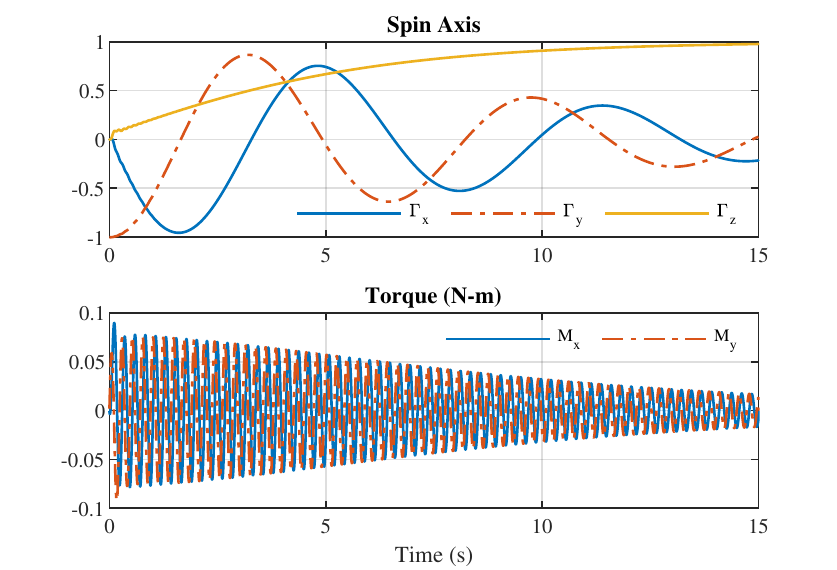}
	\caption{Structure Preserving Controller}
	\label{fig:sim_mot_str}
\end{subfigure}
\caption{Comparison of the conventional and the structure preserving controller in simulation with the motor dynamics.}
\label{fig:sim_with_motor}
\end{figure} \begin{figure}[h!]
\centering
\begin{subfigure}{.4\textwidth}
\centering
	\includegraphics[width=1\textwidth]{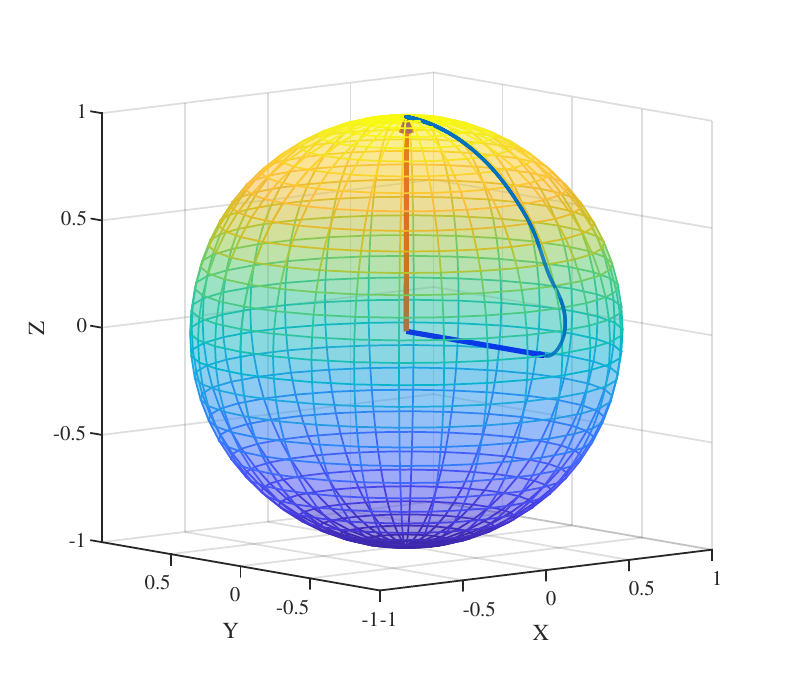}
\end{subfigure}
\begin{subfigure}{.5\textwidth}
\centering
	\includegraphics[width=1.05\textwidth]{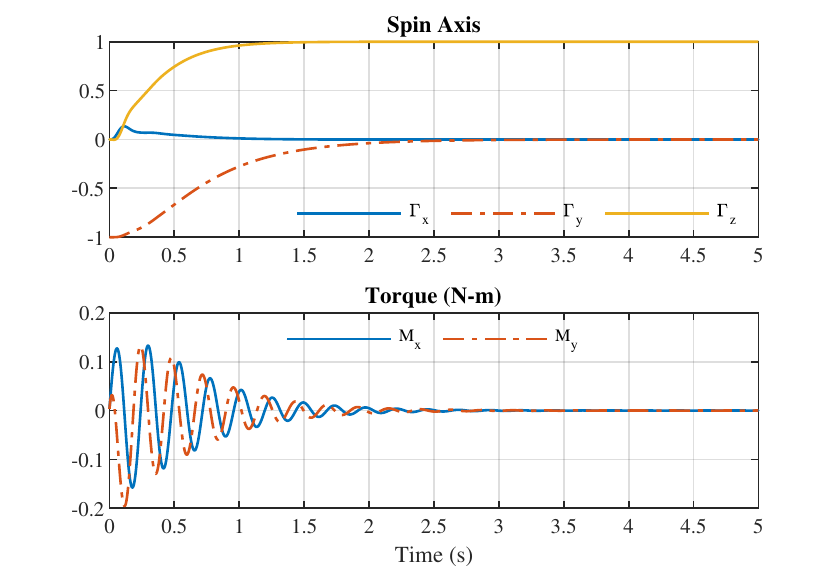}
	\label{fig:sim_motobs_str}
\end{subfigure}
\caption{Performance of the structure preserving controller with observer in simulation with the motor dynamics.}
\label{fig:sim_with_motor_obs}
\end{figure}

\section{Experimental Results}
The experimental validation of the proposed controller is performed on a spinning axisymmetric tricopter, henceforth called trispinner (see Fig. \ref{fig:trispinner}).
The trispinner has three motors with propellers, all spinning in the same direction, as actuators. They are capable of generating net thrust along the spin axis and body fixed torques $M_x$ and $M_y$. Each spinning propeller experiences aerodynamic drag torque about its axis of rotation. The net reaction torque of all the three spinning propellers makes the entire body of the trispinner spin and attain a steady state spin wherein the drag torque experienced by the body balances that of the propeller. 

The vehicle is equipped with Pixhawk autopilot hardware running the PX4 flight stack. The autopilot consists of a triaxial gyro, a triaxial accelerometer and a triaxial magnetometer together constituting the attitude heading reference system (AHRS). The stock quaternion based complimentary filter on PX4 estimates the attitude of the vehicle \cite{mahony2008nonlinear}. There are two important aspects pertaining to the sensor suit, specifically accelerometer and gyro, which are taken into consideration for the experimental implementation. Since the accelerometer is not located exactly on the spin axis, it measures the centripetal acceleration. The role of accelerometer in the attitude estimator is to correct for the gyro drift \cite{accle2010est}. The centripetal acceleration is corrected for in the attitude estimator. Second, since the onboard gyro saturates at 2000 deg/s, the motors are slightly tilted about the motor arm axis to reduce the steady state spin rate to around 1500 deg/s. The proposed structure preserving attitude controller is implemented as a separate module in the autopilot and the control loop runs at 250 Hz. 

The controller is validated by performing manual flight wherein the desired spin axis direction is specified by the pilot using a joystick. The performance of the proposed structure preserving controller with the observer given by \eqref{eq:control_gy_mot_obs} is shown in Fig. \ref{fig:exp_plot} and the experimental flight video is available in the link given in Fig. \ref{fig:trispinner}. 

\begin{figure}[h!]
\centering
\includegraphics[width=0.5\linewidth]{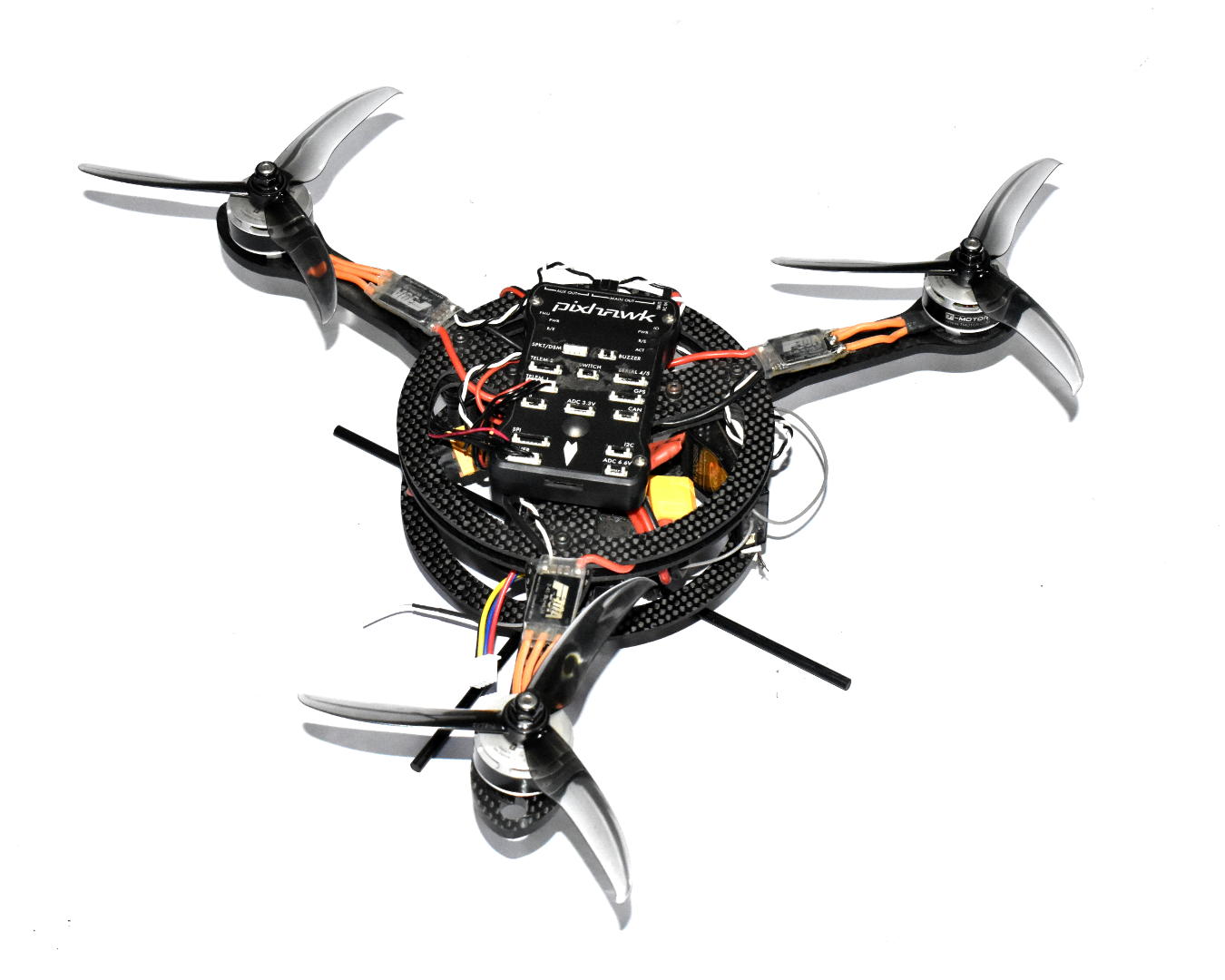}
\caption{Trispinner used for experimental validation of structure preserving controller.\\ Flight video: \url{https://youtu.be/YNm6_6VP0AM} }
\label{fig:trispinner}
\end{figure}

The controller follows the commanded reduced attitude command given by the pilot with reasonable accuracy as is evident from the plots of Fig. \ref{fig:exp_plot}. Note that the controller is designed for stabilization and not tracking, as a result perfect tracking is not expected. Further there is cross axis response, i.e. when $\Gamma_{xd}$ is commanded there is $\Gamma_y$ gets excited and vice-versa. This could be eliminated by designing a tracking controller which applies appropriate feedforward command. The reduced attitude plots $\Gamma$ in Fig. \ref{fig:exp_plot} has small oscillation with the spinning frequency, which could be attributed to slight misalignment of the autopilot board Z-axis with respect to the spin axis. The torque requirement for performing the maneuvers are sufficiently smooth for implementation purpose as is evident from Fig. \ref{fig:exp_plot}.

\begin{figure}[h!]
\centering
\includegraphics[width=0.7\linewidth]{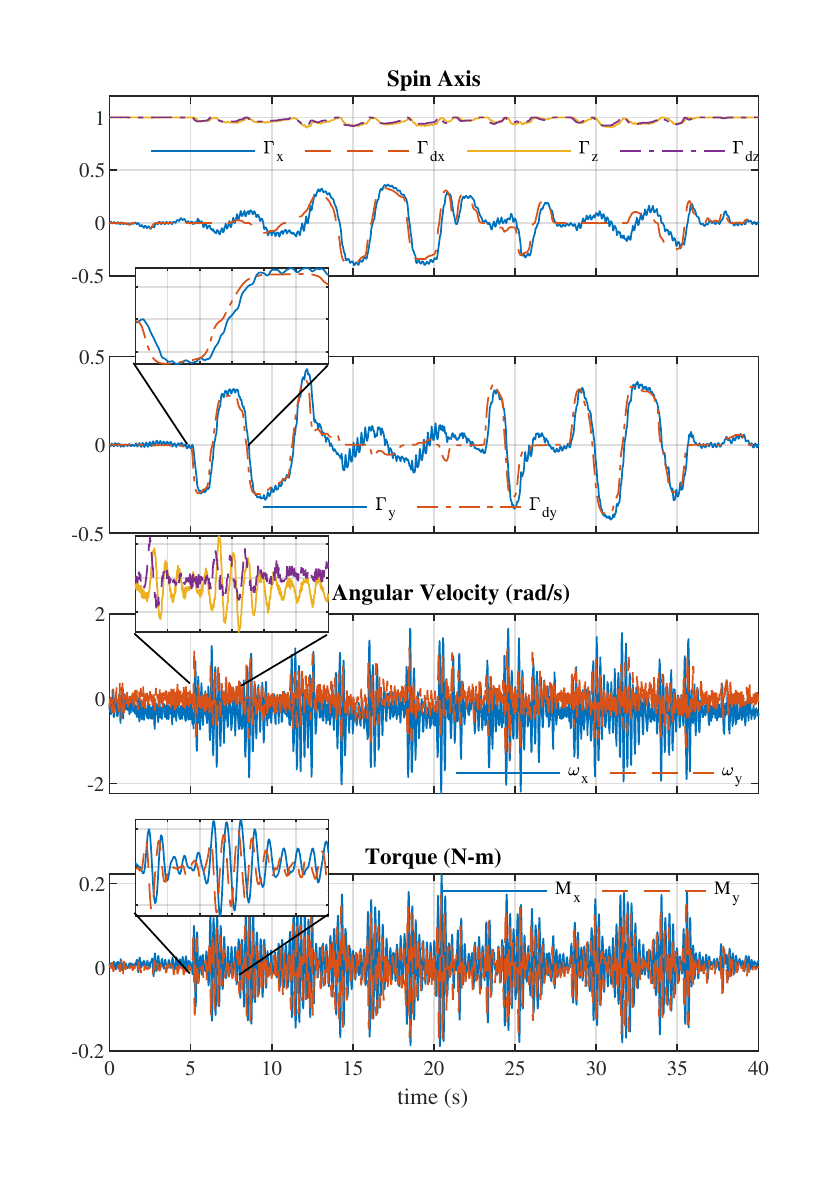}
\caption{Experimental validation of the structure preserving controller with observer}
\label{fig:exp_plot}
\end{figure}

\section{Conclusions}

In this work we have addressed the problem of reorienting the spin axis of a gyroscope in a geometric control framework. The structure preserving reduced attitude controller for a spinning gyroscope proposed in this work has been validated with experiments and simulations in the presence of disturbance. The conventional reduce attitude controller, although shown to be almost-globally asymptotically stable, is close to instability as shown in the simulations study in Sec. \ref{sec:sim}. This is due to the fact that the controller design procedure primarily focused on the kinematics on the underlying configuration space and disregarded the dominant gyroscopic effect. The proposed structure preserving controller explicitly accounts for and augments the inherent passive gyroscopic stability associated with a spinning gyroscope. This is accomplished by retaining the gyroscopic term and imparting the gyroscopic structure to the closed loop dynamics. The procedure takes care of the non-linearities of the configuration manifold, the dominant gyroscopic torque, and actuator dynamics. The explicit accountability of actuator delay is particularly important for the case wherein the actuator is rigidly attached to the body and spins with it. The controller is found to be efficient in control input as well as in terms of the path taken by the spin axis on the sphere. Since the controller is intrinsic, it is globally defined and has the best convergence properties. To the best of the authors' knowledge, no control law has been developed which preserves the gyroscopic stability of spinning axis-symmetric rigid body.

\section{Acknowledgement}
\noindent
The authors would like to gratefully acknowledge Prof. Abhishek for providing the experimental facility and equipments for performing the in-flight validation of the controller in Helicopter and VTOL Lab, IIT Kanpur. The trispinner vehicle has been built using resources obtained through the Fund for Improvement of S\&T infrastructure in universities \& higher educational institutions (FIST) grant from the Department of Science and Technology, India. L. Colombo was partially supported by I-Link Project (Ref: linkA20079) from CSIC, by Ministerio de Econom\'ia, Industria y Competitividad (MINEICO, Spain) under grant MTM2016-76702-P and by `Severo Ochoa Programme for Centres of Excellence'' in R\&D (SEV-2015-0554). The project that gave rise to these results received the support of a fellowship from ``la Caixa' Foundation'' (ID 100010434). The fellowship code is
LCF/BQ/PI19/11690016. 

\clearpage
\appendix
\section{Linearization of Error Dynamics} \label{appd:lin}
In order to linearize the error dynamics \eqref{eq:err_dyn} and \eqref{eq:err_dyn1}, we first obtain parametrization of points in the neighborhood the equilibrium points. The perturbation of equilibrium points are given by $(\Gamma_\epsilon, \omega_\epsilon) \triangleq (R_\epsilon e_3, \omega_\epsilon) \triangleq (R_{eq} e^{\epsilon \hat{\bar{\eta}}} e_3, \epsilon \bar{\zeta})$. Here, $(\bar{\eta}, \bar{\zeta}) \in \mathbb{R}^3 \times \mathbb{R}^3 $ are the linearized variables such that $\bar{\eta}_z = \bar{\zeta}_z = 0$ and its $x-y$ components are given by $(\eta, \zeta) \in \mathbb{R}^2 \times \mathbb{R}^2$ such that $(\eta, \zeta) \triangleq (E_2 \bar{\eta}, E_2 \bar{\zeta})$, with $E_2$ being the $2\times 3$ projection matrix. The linearized equations are obtained by substituting them in the error dynamics and differentiating both sides of the equation with respect to $\epsilon$, evaluated at $\epsilon = 0$. Since $\dot{\Gamma}_\epsilon = R_\epsilon \hat{\omega}_\epsilon e_3 = R_{eq}e^{\epsilon \hat{\bar{\eta}}} \epsilon \hat{\bar{\zeta}} e_3$
\begin{equation}
\begin{aligned}
\frac{d}{dt} \frac{d}{d\epsilon} \biggr\rvert_{\epsilon = 0} \Gamma_\epsilon &=
 \frac{d}{d\epsilon} \biggr\rvert_{\epsilon = 0} R_{eq}e^{\epsilon \hat{\bar{\eta}}} \epsilon \hat{\bar{\zeta}} e_3
\\
 \frac{d}{dt} R_{eq} \hat{\bar{\eta}} e_3 &=  R_{eq} \hat{\bar{\zeta}} e_3.
\end{aligned}
\end{equation}
Since the last component of both $\bar{\eta}$ and $\bar{\zeta}$ are zero, from the above we have 
\begin{equation}
\dot{\eta} = \zeta.
\end{equation}
For linearizing the second equation of \eqref{eq:err_dyn} and \eqref{eq:err_dyn1} we calculate the following
\begin{equation}
\begin{aligned}
\frac{d}{d\epsilon} \biggr\rvert_{\epsilon = 0} \omega_d &= \frac{d}{d\epsilon} \biggr\rvert_{\epsilon = 0} k_P E_2 (e_3 \times R_\epsilon^T R_d e_3) = k_P E_2 \hat{e}_3 (R_{eq}^T R_d e_3)^{\wedge} \bar{\eta} \\ 
&= \underbrace{k_P E_2 \hat{e}_3 (R_{eq}^T R_d e_3)^{\wedge} E_2^T}_{P(\Gamma_{eq})} \eta = P(\Gamma_{eq}) \eta,
\end{aligned}
\end{equation}
similarly
\begin{equation}
\frac{d}{d\epsilon} \biggr\rvert_{\epsilon = 0} \dot{\omega}_d = k_P E_2 \hat{e}_3 (R_{eq}^T R_d e_3)^{\wedge} E_2^T \zeta = P(\Gamma_{eq}) \zeta.
\end{equation}
The linearization of second equation of \eqref{eq:dyn_gy} after substituting the control input \eqref{eq:control_gy} would result in 
\begin{equation}
\dot{\zeta} = A\zeta -A P(\Gamma_{eq}) \eta + P(\Gamma_{eq})\zeta.
\end{equation}
The linearization of \eqref{eq:err_dyn} could be compactly expressed as 
\begin{equation}
\frac{d}{dt}\begin{bmatrix}
\eta \\ \zeta
\end{bmatrix} =
\underbrace{
\begin{bmatrix}
\mathbf{0_{2\times 2}} & \mathbf{I_{2 \times 2}} \\
 -A P(\Gamma_{eq}) & A + P(\Gamma_{eq}) 
\end{bmatrix}}_{S_1(\Gamma_{eq})}
\begin{bmatrix}
\eta \\ \zeta
\end{bmatrix}.
\end{equation}
Similarly, the linearization of \eqref{eq:err_dyn1} and \eqref{eq:err_dyn_mot} results in
\begin{equation}
\frac{d}{dt}\begin{bmatrix}
\eta \\ \zeta
\end{bmatrix} =
\underbrace{
\begin{bmatrix}
\mathbf{0_{2\times 2}} & \mathbf{I_{2 \times 2}} \\
 P(\Gamma_{eq})  & A  
\end{bmatrix}}_{S_2(\Gamma_{eq})}
\begin{bmatrix}
\eta \\ \zeta
\end{bmatrix}, \quad \text{and}
\end{equation}
\begin{equation}
\frac{d}{dt}\begin{bmatrix}
\eta \\ \zeta_e  \\ w_e
\end{bmatrix} =
\underbrace{
\begin{bmatrix}
 P(\Gamma_{eq}) & \mathbf{I_{2 \times 2}} & \mathbf{0_{2\times 2}} \\
 \mathbf{0_{2\times 2}}  & A  & \mathbf{I_{2 \times 2}} \\
 \mathbf{0_{2\times 2}} & \mathbf{0_{2\times 2}} & A_m
\end{bmatrix}}_{S_3(\Gamma_{eq})}
\begin{bmatrix}
\eta \\ \zeta_e \\ w_e
\end{bmatrix}
\end{equation}
respectively, where $(\zeta_e, w_e)$ is the linearization of $(\omega_e, u_e)$.

All three matrices, $S_1(\Gamma_{eq})$, $S_2(\Gamma_{eq})$, and $S_3(\Gamma_{eq})$ when evaluated at the desired equilibrium point $\Gamma_d$ are Hurwitz, where as they are unstable when evaluated at $-\Gamma_d$.

\bibliography{ref.bib}

\clearpage 

\end{document}